\begin{document}
\title{Exponential wealth distribution in a random market. \\ A rigorous explanation}
\author{Jos\'e-Luis Lopez\inst{1} \and Ricardo L\'opez-Ruiz\inst{2} \and Xavier Calbet\inst{3}}                     
%
\institute{Dep. of Mathematical and Informatics Engineering, 
Universidad P\'ublica de Navarra, Pamplona, and BIFI, Spain \and
Dep. of Computer Science \& BIFI, Faculty of Science, 
Universidad de Zaragoza, Zaragoza, Spain \and 
BIFI, Inst. de Biocomput. y F\'{\i}sica de Sistemas Complejos,
Universidad de Zaragoza, Zaragoza, Spain}
\date{}
%
\abstract{
In simulations of some economic gas-like models, the asymptotic regime shows
an exponential wealth distribution, independently of the initial wealth distribution 
given to the system. The appearance of this statistical equilibrium for this 
type of gas-like models is explained in a rigorous analytical way. 
} 
\maketitle
\section{Introduction}
\label{intro}

Exponential (Gibbs) distribution is ubiquitous in many natural phenomena.
It is found in many equilibrium situations ranging from physics to economy.
In an economic context, the society of the Western (capitalist) economies 
can be divided into two distinct groups according to the incomes of people \cite{dragu2001}. 
The 95\% of the population, the middle and lower economic classes of society, 
shows an exponential wealth distribution. The rest of the population, the 5\% of individuals, 
arranges their privileged incomes in a power law distribution. 

Gas-like models interpret economic transactions between agents trading with money
similarly to collisions in a gas where particles exchange energy \cite{yakoven2009,chakra2010}. 
Different gas-like models have successfully been proposed to reproduce the 
two different types of statistical behavior before mentioned.
The exponential distribution can be obtained by supposing a gas of agents
that exchange money in binary collisions, or in first-neighbor interactions, 
with random, deterministic or chaotic ingredients in the selection of agents
or in the exchange law \cite{dragu2000,estevez2008,pellicer2011}. The power law 
behavior is obtained by introducing some kind of inhomogeneity in the gas 
through the saving propensity of the agents \cite{chatter2004} or through the breaking 
of the pairing symmetry in the exchange rules \cite{pellicer2011} .

In this work, we take a continuous version of one of such 
models \cite{lopezruiz2011} and  we show that the exponential 
distribution is the only fixed point of the discrete time evolution 
operator associated to this particular model.
This rigorous analytical result helps us to understand 
the computational statistical behavior of such a model
which reaches its asymptotic equilibrium on the exponential distribution 
independently of the out of equilibrium initial state of the system.

\section{The continuous gas-like model}
\label{sec:1}

We take the simplest gas-like model \cite{dragu2000} in which an ensemble of economic agents
trade with money in a random manner. An initial amount of money is given to each agent, for
instance the same to each one. Transactions between pairs of agents are allowed in such a way
that both agents in each interacting pair are randomly chosen. Also they share their money 
in a random way. When the gas evolves with this type of interactions, the exponential distribution
appears as the asymptotic wealth distribution. In this model, the local interactions conserve 
the money, then the global dynamics is also conservative and the total amount of money is 
constant in time. 

The discrete version of this model is as follows \cite{dragu2000}.
The trading rules for each interacting pair $(m_i,m_j)$ of the ensemble of $N$ economic 
agents can be written as
\begin{eqnarray}
m'_i &=& \epsilon \; (m_i + m_j), \nonumber\\
m'_j &=& (1 - \epsilon) (m_i + m_j), \label{model1}\\
i , j &=& 1 \ldots N, \nonumber
\end{eqnarray}
where $\epsilon$ is a random number in the interval $(0,1)$. 
The agents $(i,j)$ are randomly chosen. Their initial money $(m_i, m_j)$, 
at time $t$, is transformed after the interaction in $(m'_i, m'_j)$ at time $t+1$. 
The asymptotic distribution $p_f(m)$, obtained by numerical simulations, 
is the exponential (Boltzmann-Gibbs) distribution,
\begin{equation}
p_f(m)=\beta \exp(-\beta \,m), \hspace{0.5cm}\hbox{with}\hspace{0.5cm}\beta={1/ <m>_{gas}},
\label{eq-exp}
\end{equation}
where $p_f(m) {\mathrm d}m$ denotes the PDF ({\it probability density function}), i.e.
the probability of finding an agent with money (or energy in a gas system) between 
$m$ and $m + {\mathrm d}m$. 
Evidently, this PDF is normalized, $\vert\vert p_f\vert\vert=\int_0^{\infty} p_f(m){\mathrm d}m=1$. 
The mean value of the wealth, $<m>_{gas}$, can be easily calculated directly from the gas
by $<m>_{gas}=\sum_i m_i/N$.   

The continuous version of this model consists in thinking 
that an initial wealth distribution $p_0(m)$, with a mean wealth value $<p_0>$,
evolves from time $n$ to time $n+1$ under the action of an operator $T$ to asymptotically 
reach the equilibrium distribution $p_f(m)$, i.e.
\begin{equation}
\lim_{n\rightarrow\infty} {T}^n \left(p_0(m)\right) \rightarrow p_f(m).
\label{eq-operT}
\end{equation}
In this particular case, $p_f(m)$ is the exponential distribution
with the same average value $<p_f>=<p_0>$, due to the conservation of the total money. 
This is a macroscopic interpretation of Eqs. (\ref{model1}) in the sense that,
under this point of view, each iteration of the operator $T$ means 
that many interactions, of order $N/2$,
have taken place between different pairs of agents. As $n$
indicates the time evolution of $T$, we can roughly assume that $t\approx N*n/2$,
where $t$ follows the microscopic evolution of the individual transactions (or collisions) 
between the agents (this alternative microscopic interpretation can be seen in \cite{calbet2011}).

\begin{figure}[h]
  \resizebox{0.90\columnwidth}{!}{\includegraphics{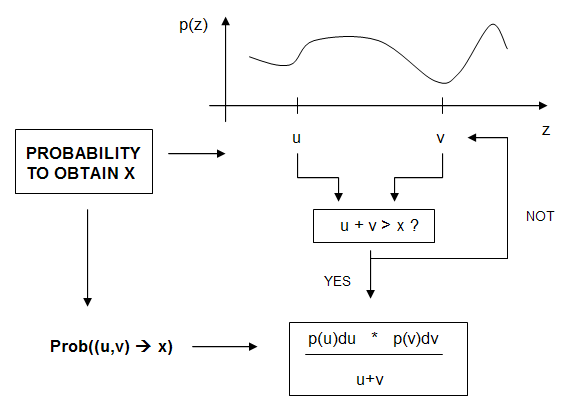}}
  \caption{Derivation of the nonlinear evolution operator $T$ for the 
  agent-based economic model given by Eqs. (\ref{model1}). See explanation in the text.}
  \label{Fig1}
\end{figure}

Now, we recall how to derive  $T$ (see Ref. \cite{lopezruiz2011}).
Suppose that $p_n$ is the wealth distribution in the ensemble at time $n$. 
The probability that two agents with money $(u,v)$ interact will be
$p_n(u)*p_n(v)$. As the trading is totally random, their exchange can give rise with equal
probability to any value $x$ comprised in the interval $(0,u+v)$. Then, the probability to
obtain a particular $x$ (with $x<u+v$) for the interacting pair $(u,v)$ will be 
$p_n(u)*p_n(v)/(u+v)$. Finally, we can obtain the probability to have money $x$ at time $n+1$.
It will be the sum of the probabilities for all the pairs of agents $(u,v)$ able to generate the
quantity $x$, that is, all the pairs verifying $u+v>x$. $T$ has then the form
of a nonlinear integral operator (see Fig. \ref{Fig1}),
\begin{equation}
p_{n+1}(x)={T}p_n(x) = \int\!\!\int_{u+v>x}\,{p_n(u)p_n(v)\over u+v} 
\; {\mathrm d}u{\mathrm d}v \,.
\label{eq-T}
\end{equation}

If we suppose $T$ acting in the PDFs space, we prove in the next section that $T$
conserves the mean wealth of the system, $<Tp>=<p>$. It also conserves 
the norm ($\vert\vert \cdot\vert\vert$), i.e. $T$ maintains the total number of agents 
of the system, $\vert\vert T p\vert\vert=\vert\vert p\vert\vert=1$, that 
by extension implies the conservation of the total richness of the system. 
We also show that the exponential distribution $p_f(x)$ with the right average value
is the only steady state of $T$, i.e. $T p_f=p_f$, and that order two periodic points 
do not exist. It also seems feasible that other high period orbits do not exist.
In consequence, it can be argued that the relation (\ref{eq-operT}) is true. 
See for instance simulations done in Fig. \ref{Fig2}. The same result is obtained for
any other initial distribution \cite{shiva2011}.
 
\begin{figure}[h]
  \centering \resizebox{0.90\columnwidth}{!}{\includegraphics{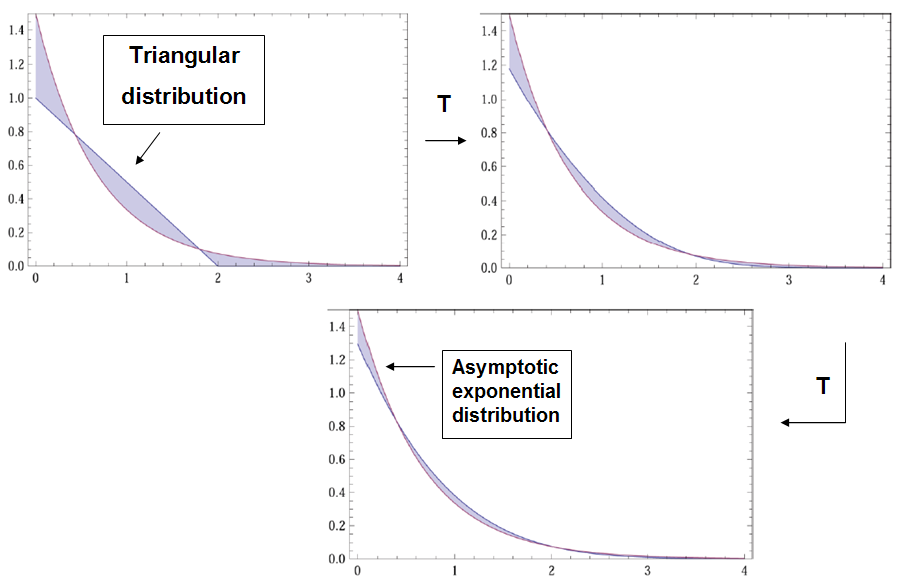}} 
  \caption{An example of the action of $T$ in the space of PDFs. In this case, $p_0(x)$ 
  is chosen as a normalized triangular distribution. Observe that after two iterations of $T$
  the system has practically reached the asymptotic state, i.e. the exponential distribution $p_f(x)$.
  (Figure avanced from \cite{shiva2011}). }
  \label{Fig2}
\end{figure}

\section{Properties of operator $T$}
\label{sec:2}

\subsection{Definitions}

\begin{definition}
We introduce the space $L_1^+$ of positive functions (wealth distributions)
in the interval $[0,\infty)$,
$$
L_1^+[0,\infty)=\lbrace y:[0,\infty)\to R^+ \cup\lbrace0\rbrace,
\hskip 2mm \vert\vert y\vert\vert<\infty\rbrace,
$$
with norm
$$
\vert\vert y\vert\vert=\int_0^\infty y(x) dx.
$$
\end{definition}

\begin{definition}
\label{def-mean1}
We define the mean richness $<x>_y$ associated to a wealth distribution $y\in L_1^+[0,\infty)$ as
the mean value of $x$ for the distribution $y$. In the rest of the paper,
we will represent it by $<y>$. Then,  
$$
<y> \equiv <x>_y = \vert\vert xy(x)\vert\vert=\int_0^\infty xy(x) dx.
$$

\end{definition}

\begin{definition}
\label{def-region}
For $x\ge 0$ and $y\in L_1^+[0,\infty)$ the action of operator $T$ on $y$ is defined by
$$
T(y(x))=\int\int_{S(x)} dudv{y(u)y(v)\over u+v}, 
$$
with $S(x)$ the region of the plane representing the pairs of agents $(u,v)$ which can
generate a richness $x$ after their trading, i.e.
$$
S(x)=\lbrace (u,v), \hskip 2mm u,v>0,\hskip 2mm u+v>x\rbrace.
$$
\end{definition}

After some manipulations, other representations of the integral operator $T$ are possible:
\begin{equation}
(Ty)(x)={1\over 2}\int_x^\infty{dr\over r}\int_{-r}^r dt\,\, y\left({r+t\over 2}\right)y\left({r-t\over 2}\right),
\label{operator}
\end{equation}
$$
(Ty)(x)=x\left[\int_0^1 dt\int_{1-t}^\infty ds+\int_1^\infty dt\int_0^\infty ds\right]{y(xt)y(xs)\over t+s},
$$
$$
(Ty)(x)=x\left[\int_0^\infty dt\int_0^\infty ds-\int_0^1 dt\int_0^{1-t} ds\right]{y(xt)y(xs)\over t+s},
$$
$$
(Ty)(x)=\int_x^\infty dr\int_0^1 du\,\,y(ur)y((1-u)r). \hskip 3cm
$$

\subsection{Theorems}

\begin{theorem}
\label{teorema-norma} 
For any $y\in L_1^+[0,\infty)$ we have that 
$\vert\vert Ty\vert\vert=\vert\vert y\vert\vert^2$. In particular, 
for $y$ being a PDF, i.e. if $\vert\vert y\vert\vert=1$, then $\vert\vert Ty\vert\vert=1$.
(It means that the number of agents in the economic system is conserved in time).
\end{theorem}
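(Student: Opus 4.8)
The plan is to compute the norm $\|Ty\|=\int_0^\infty (Ty)(x)\,dx$ directly from Definition~\ref{def-region}, interchange the $x$-integration with the $(u,v)$-integration, and evaluate the resulting inner integral in $x$ explicitly.

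First I would substitute the definition of $T$ and write
$$
\|Ty\|=\int_0^\infty\left(\iint_{S(x)}\frac{y(u)\,y(v)}{u+v}\,du\,dv\right)dx .
$$
Since $y\ge 0$, the function $(x,u,v)\mapsto \mathbf{1}_{\{u+v>x\}}\,y(u)y(v)/(u+v)$ is nonnegative and measurable on $[0,\infty)^3$, so Tonelli's theorem lets me swap the order of integration with no a~priori integrability assumption. Doing the $x$-integral first, the section $\{x\ge 0:\,(u,v)\in S(x)\}$ is the interval $[0,u+v)$, hence $\int_0^{u+v}dx=u+v$, which cancels the denominator and leaves
$$
\|Ty\|=\int_0^\infty\!\!\int_0^\infty y(u)\,y(v)\,du\,dv=\left(\int_0^\infty y(u)\,du\right)^2=\|y\|^2 .
$$
The statement for PDFs is then immediate: if $\|y\|=1$, then $\|Ty\|=1$.

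The computation itself is short; the only point that needs a word of care — and what I would treat as the main (if routine) obstacle — is the rigorous justification of the interchange of integrals. One must observe that the triple integrand is measurable (because $y$ is measurable and $\mathbf{1}_{S(x)}$ is the indicator of a measurable subset of the plane) and nonnegative, so Tonelli applies directly, and the final chain of equalities simultaneously exhibits the value as finite by the hypothesis $\|y\|<\infty$. As a cross-check I would note that the same identity drops out of the alternative representation $(Ty)(x)=\int_x^\infty dr\int_0^1 du\,y(ur)y((1-u)r)$ recorded in the excerpt, by swapping $\int_0^\infty dx\int_x^\infty dr=\int_0^\infty dr\int_0^r dx$, using $\int_0^r dx=r$, and then applying the substitution $(u,r)\mapsto(ur,(1-u)r)$ whose Jacobian is $r$; I would keep the first derivation as the primary one, since it is the most transparent.
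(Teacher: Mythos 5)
Your proposal is correct and follows essentially the same route as the paper: exchange the order of integration, evaluate $\int_0^{u+v}dx=u+v$ to cancel the denominator, and factor the remaining double integral into $\Vert y\Vert^2$. The only addition is your explicit appeal to Tonelli's theorem, which the paper leaves implicit (it simply refers to the integration region in a figure), so this is a welcome but minor refinement rather than a different argument.
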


\begin{proof}
Take $y\in L_1^+[0,\infty)$. Then (see Fig. \ref{Fig3})
$$
\vert\vert Ty\vert\vert=\int_0^\infty dx\int\int_{S(x)} dudv{y(u)y(v)\over u+v}=\hskip 1.4cm
$$
$$
=\int_0^\infty du\int_0^\infty dv\int_0^{u+v}  dx{y(u)y(v)\over u+v}= 
$$
$$
=\int_0^\infty y(u)du\int_0^\infty y(v)dv=\vert\vert y\vert\vert^2. \hskip 0.5cm \clubsuit
$$
\end{proof}

\begin{corollary} 
It is clear that $(Ty)(x)>0$ for $x>0$ and $Ty\in L_1^+[0,\infty)$. 
Also, it is straightforward to see that $Ty$ is a continuous function, i.e.
$Ty\in\mathbb{C}[0,\infty)$. Therefore, 
from the above Theorem we have that $Ty\in L_1^+[0,\infty)\bigcap\mathbb{C}[0,\infty)$, 
that is $T: L_1^+[0,\infty)\longrightarrow  L_1^+[0,\infty)\bigcap\mathbb{C}[0,\infty)$.
\end{corollary}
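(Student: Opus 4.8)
The plan is to dispatch the three assertions in turn; the first two are immediate consequences of Theorem~\ref{teorema-norma} and positivity of the integrand, while continuity is where essentially all the content lies.

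\emph{Nonnegativity and membership in $L_1^+$.} For $x>0$ the integrand $y(u)y(v)/(u+v)$ in Definition~\ref{def-region} is nonnegative on $S(x)$, so $(Ty)(x)\ge 0$; and $\vert\vert Ty\vert\vert=\vert\vert y\vert\vert^2<\infty$ by Theorem~\ref{teorema-norma}, so indeed $Ty\in L_1^+[0,\infty)$. For the strict inequality $(Ty)(x)>0$ one needs the set $\bigl(\{y>0\}\times\{y>0\}\bigr)\cap S(x)$ to have positive planar measure; I would record this in the form that holds whenever $y$ is positive almost everywhere on $[0,\infty)$ (the relevant case, e.g.\ $y=p_f$), since then for any $x>0$ a large square $[0,R]^2$ with $R>x$ meets $S(x)$ in a set of positive measure on which the integrand is strictly positive. (A $y$ supported in $[0,x_0]$ forces $(Ty)(x)=0$ for $x>2x_0$, so the unrestricted claim ``$(Ty)(x)>0$ for all $x>0$'' is literally false for general $y\in L_1^+$ and should be phrased with a full-support hypothesis.)

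\emph{Continuity on $(0,\infty)$.} Fix $x_0>0$ and pick $0<a<x_0$. For every $x\ge a$ we have $S(x)\subseteq S(a)$, and on $S(a)$ the elementary bound $1/(u+v)\le 1/a$ holds, so the integrand defining $(Ty)(x)$ is dominated by $a^{-1}\,y(u)y(v)$, which lies in $L^1\bigl((0,\infty)^2\bigr)$ because $\int\int y(u)y(v)\,du\,dv=\vert\vert y\vert\vert^2<\infty$ by Theorem~\ref{teorema-norma}. For fixed $(u,v)$ the map $x\mapsto \mathbf 1_{S(x)}(u,v)$ is continuous at $x_0$ except when $u+v=x_0$, i.e.\ for almost every $(u,v)$, so dominated convergence gives $(Ty)(x)\to(Ty)(x_0)$ as $x\to x_0$. (Alternatively, from the representation $(Ty)(x)=\int_x^\infty G(r)\,dr$ with $G(r)=\int_0^1 y(ur)y((1-u)r)\,du\ge 0$, the inequality $(Ty)(x)\le \vert\vert Ty\vert\vert/x$ forced by monotonicity of $Ty$ shows $G\in L^1_{\mathrm{loc}}(0,\infty)$ with finite tails, and the tail integral of such a function is continuous on $(0,\infty)$.)

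\emph{The endpoint and the main obstacle.} Since $S(x)$ increases to $S(0)$ as $x\downarrow 0$, monotone convergence yields $(Ty)(x)\uparrow (Ty)(0)=\int\int_{u,v>0} y(u)y(v)/(u+v)\,du\,dv$, so $Ty$ is continuous at $0$ exactly when this integral is finite; that is the only genuinely delicate point. It is finite whenever $y$ is bounded near the origin — in particular for $p_f$ and for the iterates $T^np_0$ occurring in the paper — because then the only possible singularity of the integrand sits at $(u,v)=(0,0)$, is of order $1/(u+v)$, hence integrable in the plane, while away from the origin it is dominated by a multiple of $y(u)y(v)\in L^1$. Combining the two paragraphs gives $Ty\in L_1^+[0,\infty)\cap\mathbb C[0,\infty)$ and hence $T\colon L_1^+[0,\infty)\to L_1^+[0,\infty)\cap\mathbb C[0,\infty)$. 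The expected obstacle is precisely this limit interchange at $x=0$: for a $y$ too singular at the origin one only gets $Ty\in\mathbb C(0,\infty)$, so the clean statement $\mathbb C[0,\infty)$ needs the (harmless) extra remark on the behaviour of $y$ near $0$, everything else being routine once Theorem~\ref{teorema-norma} is in hand.
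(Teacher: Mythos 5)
Your proposal is correct, but it cannot be ``the same approach as the paper'' because the paper offers no proof at all here: the corollary is asserted with ``it is clear'' and ``it is straightforward to see''. Your write-up supplies the missing argument, and the two nontrivial steps you give are sound: membership in $L_1^+$ follows from Theorem~\ref{teorema-norma}, and continuity on $(0,\infty)$ follows by dominated convergence with the majorant $a^{-1}y(u)y(v)$ on $S(a)$ (or, equivalently, from the tail-integral representation $(Ty)(x)=\int_x^\infty G(r)\,dr$). More importantly, you correctly identify that the corollary as literally stated is too strong, and both of your counterexamples are genuine. First, if $y$ is supported in $[0,x_0]$ then $(Ty)(x)=0$ for $x>2x_0$, so ``$(Ty)(x)>0$ for all $x>0$'' requires something like $y>0$ a.e. Second, continuity at $x=0$ can fail inside $L_1^+[0,\infty)$: for $y(x)=x^{-1/2}e^{-x}/\Gamma(1/2)$ one computes, via $u=rt$, $v=r(1-t)$, that $(Ty)(0)=\int_0^\infty r^{-1}e^{-r}\,dr\cdot\int_0^1 t^{-1/2}(1-t)^{-1/2}\,dt/\Gamma(1/2)^2=+\infty$, so $Ty\in\mathbb{C}(0,\infty)$ but not $\mathbb{C}[0,\infty)$; your sufficient condition (boundedness of $y$ near the origin, which covers $p_f$ and the examples in Section~\ref{sec:3}) is the right repair. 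The only minor quibble is that your parenthetical alternative for continuity invokes ``monotonicity of $Ty$'' to get $(Ty)(x)\le\vert\vert Ty\vert\vert/x$; the cleaner route is the direct bound $(Ty)(x)\le\vert\vert y\vert\vert^2/x$ from $1/(u+v)\le 1/x$ on $S(x)$, which needs no monotonicity. In short: your proof is a strict improvement on the paper's non-proof, and the caveats you raise are corrections the statement actually needs.
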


\begin{corollary}  
Consider the subset of PDFs in $L_1^+[0,\infty)$, i.e. the unit sphere 
$B:=\lbrace y\in L_1^+[0,\infty)$, $\vert\vert y\vert\vert=1\rbrace$. Observe that
if $y\in B$ then $Ty\in B$. Therefore, we can define the restriction $T_B$ of $T$ 
to the subset $B$, that is $T_B:B\to B$.
\end{corollary}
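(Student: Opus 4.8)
The plan is to obtain this as an immediate consequence of Theorem~\ref{teorema-norma} together with the first corollary, since it is essentially a bookkeeping statement. First I would fix an arbitrary $y\in B$, so that by definition $y\in L_1^+[0,\infty)$ and $\vert\vert y\vert\vert=1$. Applying Theorem~\ref{teorema-norma} then gives $\vert\vert Ty\vert\vert=\vert\vert y\vert\vert^2=1$; in particular $\vert\vert Ty\vert\vert<\infty$.

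Next I would verify that $Ty$ still belongs to $L_1^+[0,\infty)$, i.e. that it is a nonnegative function with finite norm. Nonnegativity is immediate: since $y\ge 0$ on $[0,\infty)$, the integrand $y(u)y(v)/(u+v)$ is nonnegative on the region $S(x)$, so $(Ty)(x)\ge 0$ for every $x\ge 0$ (indeed strictly positive for $x>0$, as already recorded in the first corollary). Finiteness of $\vert\vert Ty\vert\vert$ was just established. Hence $Ty\in L_1^+[0,\infty)$ with $\vert\vert Ty\vert\vert=1$, which is precisely the statement $Ty\in B$.

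Since $y\in B$ was arbitrary, $T$ maps $B$ into itself, and one may therefore legitimately define the restriction $T_B:=T|_B:B\to B$. I would also remark in passing that $B$ is nonempty --- the exponential density $p_f$ of Eq.~(\ref{eq-exp}) lies in it --- so that this restricted map is not vacuous.

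I do not anticipate any genuine obstacle here: the corollary is a direct consequence of the norm identity $\vert\vert Ty\vert\vert=\vert\vert y\vert\vert^2$ proved in Theorem~\ref{teorema-norma}. The only point needing a moment's care is confirming that $Ty$ inherits membership in $L_1^+[0,\infty)$ (nonnegativity together with finiteness of the norm), and both ingredients are already supplied by the preceding corollary and the theorem.
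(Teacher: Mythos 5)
Your argument is correct and is exactly the reasoning the paper intends: the corollary is stated without proof as an immediate consequence of Theorem~\ref{teorema-norma} (giving $\vert\vert Ty\vert\vert=\vert\vert y\vert\vert^2=1$) and the preceding corollary (giving $Ty\in L_1^+[0,\infty)$). Nothing is missing.
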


\begin{figure}[h]
  \centering \resizebox{0.70\columnwidth}{!}{\includegraphics{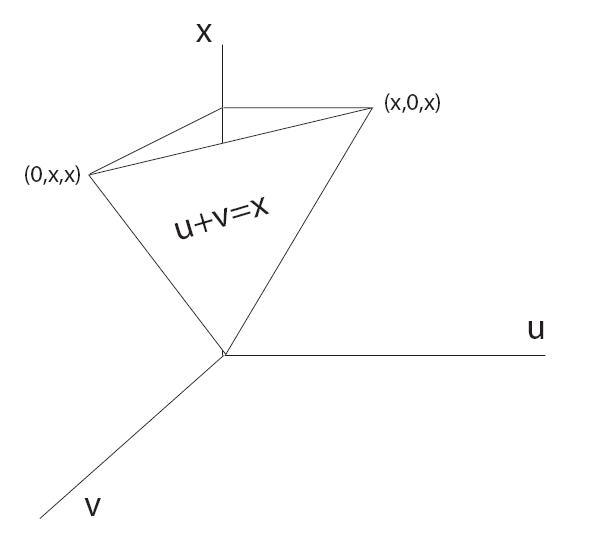}} 
  \caption{Integration region in the triple integral considered in the proof of 
  Theorem \ref{teorema-norma}.}
  \label{Fig3}
\end{figure}

\begin{proposition} 
The operator $T$ is Lipchitz continuous in $B$ with Lipchitz constant $\le 2$.
\end{proposition}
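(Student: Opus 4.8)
The plan is to estimate $\|Ty_1 - Ty_2\|$ for $y_1, y_2 \in B$ directly from the symmetric representation in Definition \ref{def-region}, exploiting the algebraic identity that holds for any bilinear expression, namely $y_1(u)y_1(v) - y_2(u)y_2(v) = y_1(u)\bigl(y_1(v)-y_2(v)\bigr) + y_2(v)\bigl(y_1(u)-y_2(u)\bigr)$. Substituting this into the double integral defining $(Ty_1 - Ty_2)(x)$ splits it into two pieces, each of which is itself of the ``$T$-type'' but with one factor replaced by the difference $y_1 - y_2$. More precisely, I would introduce the bilinear form $B[f,g](x) := \int\!\!\int_{S(x)} \frac{f(u)g(v)}{u+v}\,du\,dv$, so that $Ty = B[y,y]$ and $Ty_1 - Ty_2 = B[y_1, y_1 - y_2] + B[y_2 - y_1, y_2]$, wait — more cleanly, $= B[y_1,\,y_1-y_2] + B[y_1-y_2,\,y_2]$.

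The next step is to run the same Fubini computation as in the proof of Theorem \ref{teorema-norma}, but now for the bilinear form: integrating $B[f,g](x)$ over $x \in [0,\infty)$ swaps the order of integration exactly as there, the inner $\int_0^{u+v} dx$ cancels the $(u+v)$ in the denominator, and one is left with $\int_0^\infty |B[f,g](x)|\,dx \le \int_0^\infty |f(u)|\,du \int_0^\infty |g(v)|\,dv = \|f\|\,\|g\|$ whenever $f,g \in L_1^+$ (and an analogous bound with absolute values in the general signed case, since $y_1 - y_2$ need not be nonnegative — here one uses $|B[f,g]| \le B[|f|,|g|]$ pointwise). Applying this to each of the two terms gives
$$
\|Ty_1 - Ty_2\| \le \|y_1\|\,\|y_1 - y_2\| + \|y_1 - y_2\|\,\|y_2\| = \|y_1-y_2\|\bigl(\|y_1\| + \|y_2\|\bigr) = 2\,\|y_1 - y_2\|,
$$
using $\|y_1\| = \|y_2\| = 1$ in $B$. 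This yields the Lipschitz constant $\le 2$.

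The only mild obstacle I anticipate is bookkeeping around signs: since the difference $y_1 - y_2$ of two PDFs is a signed function, one must check that the Fubini/Tonelli interchange is still justified (it is, by applying Tonelli to the nonnegative integrand $|y_1(u)-y_2(u)|\,y_i(v)/(u+v)$ over $S(x) \times [0,\infty)$ in $x$, which shows the triple integral is finite) and that the pointwise bound $|B[f,g](x)| \le B[|f|,|g|](x)$ is invoked before integrating. Everything else is a verbatim repeat of the norm computation already carried out for Theorem \ref{teorema-norma}, so no new estimates are needed. One could also remark that the constant $2$ is presumably not optimal on all of $B$, but it suffices for the purposes here.
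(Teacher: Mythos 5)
Your proposal is correct and follows essentially the same route as the paper: the bilinear decomposition $y_1(u)y_1(v)-y_2(u)y_2(v)=y_1(u)\bigl(y_1(v)-y_2(v)\bigr)+y_2(v)\bigl(y_1(u)-y_2(u)\bigr)$ is exactly the paper's add-and-subtract step, followed by the same Fubini computation as in Theorem \ref{teorema-norma} and the triangle inequality. If anything, your treatment is slightly more careful than the paper's, since you explicitly note that the absolute value must be pulled inside the integral as an inequality (the signed difference $y_1-y_2$ requires $|B[f,g]|\le B[|f|,|g|]$ before integrating), a point the paper glosses over by writing an equality there.
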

\begin{proof} 
Take $y,w\in B$. Then
$$
\vert\vert Ty-Tw \vert\vert= \hskip 6cm
$$
$$
=\int_0^\infty du\int_0^\infty dv\int_0^{u+v}  dx{\vert y(u)y(v)-w(u)w(v)\vert\over u+v}= \hskip 1cm
$$
$$
=\int_0^\infty du\int_0^\infty dv \vert y(u)y(v)-y(u)w(v)+ \hskip 2cm
$$
$$
\hskip 1cm +y(u)w(v)-w(u)w(v)\vert\le \hskip 1.2cm
$$
$$
\le\int_0^\infty y(u)du\int_0^\infty dv \vert y(v)-w(v)\vert+ \hskip 3cm
$$
$$
+\int_0^\infty w(v)dv\int_0^\infty du  \vert y(u)-w(u)\vert=
2\vert\vert y-w\vert\vert. \hskip 0.5cm \clubsuit
$$
\end{proof}

\begin{theorem}
The mean value $<y>$ of a PDF $y$ is conserved, that is $<Ty>=<y>$ for any $y\in B$.
(It means that the mean wealth, and by extension the total richness, of the economic system
are preserved in time).
\end{theorem}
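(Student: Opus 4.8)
The plan is to mimic the computation used in the proof of Theorem~\ref{teorema-norma}, but now inserting the extra factor $x$ coming from Definition~\ref{def-mean1}. Starting from
$$
<Ty>=\int_0^\infty x\,(Ty)(x)\,dx=\int_0^\infty x\,dx\int\!\!\int_{S(x)}du\,dv\,{y(u)y(v)\over u+v},
$$
I would swap the order of integration exactly as in Fig.~\ref{Fig3}: the region $\{(x,u,v): u,v>0,\ 0<x<u+v\}$ is the same, so the triple integral becomes
$$
\int_0^\infty du\int_0^\infty dv\,{y(u)y(v)\over u+v}\int_0^{u+v}x\,dx .
$$
The inner integral now evaluates to $\tfrac12(u+v)^2$ instead of just $(u+v)$, which is precisely the point where the mean-value computation differs from the norm computation.

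Carrying out that inner integration gives
$$
<Ty>={1\over 2}\int_0^\infty\int_0^\infty (u+v)\,y(u)y(v)\,du\,dv
={1\over 2}\int_0^\infty\int_0^\infty \big(u\,y(u)y(v)+v\,y(u)y(v)\big)\,du\,dv .
$$
Splitting the symmetric sum and factoring each double integral, the first term equals $\tfrac12\,\|xy(x)\|\,\|y\|=\tfrac12<y>\,\|y\|$ and the second term equals $\tfrac12\,\|y\|\,\|xy(x)\|=\tfrac12\|y\|<y>$, so altogether $<Ty>=\|y\|<y>$. Since $y\in B$ means $\|y\|=1$, this yields $<Ty>=<y>$, as claimed.

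The only genuine subtlety — and the step I would be most careful about — is justifying the interchange of the order of integration, i.e.\ applying Tonelli's theorem to the nonnegative integrand $x\,y(u)y(v)/(u+v)$ over the region $S(x)$. This is legitimate precisely because $<y>$ is assumed finite: the final expression $\tfrac12\int\!\int (u+v)y(u)y(v)\,du\,dv = <y>\,\|y\| < \infty$, so all the iterated integrals agree and are finite, which retroactively validates the rearrangement. I would state this finiteness explicitly (noting $y\in B\subset L_1^+$ with $<y><\infty$ by hypothesis) before performing the Fubini--Tonelli swap, and then the rest is the routine two-line computation above. No other obstacle is expected; the argument is essentially the norm proof with the weight $x$ inserted.
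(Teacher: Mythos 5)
Your proposal is correct and follows essentially the same route as the paper's own proof: exchange the order of integration over the region $\{u,v>0,\ 0<x<u+v\}$, evaluate $\int_0^{u+v}x\,dx=\tfrac12(u+v)^2$, and use the symmetry of $(u+v)y(u)y(v)$ together with $\vert\vert y\vert\vert=1$ to conclude. Your explicit appeal to Tonelli for the nonnegative integrand is a small but welcome addition that the paper leaves implicit.
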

\begin{proof}
$$
<Ty>=\vert\vert x(Ty)(x)\vert\vert=\int_0^\infty xdx\int\int_{S(x)}dtds{y(t)y(s)\over t+s}=
$$
$$
=\int_0^\infty dt\int_0^\infty ds\int_0^{t+s} xdx{y(t)y(s)\over t+s}=
$$
$$
={1\over 2}\int_0^\infty dt\int_0^\infty ds(t+s)y(t)y(s)=
$$
$$
={1\over 2}\int_0^\infty ty(t)dt+{1\over 2}\int_0^\infty sy(s)ds= 
$$
$$
=\int_0^\infty xy(x)dx= \vert\vert xy(x)\vert\vert=<y>. \hskip 0.5cm \clubsuit
$$
\end{proof}

\noindent
{\bf Observation:}  
Because $\vert\vert Ty\vert\vert=\vert\vert y\vert\vert^2$ in $L_1^+[0,\infty)$, 
this operator divides the space $L_1^+[0,\infty)$ in three regions: the interior of 
the unit ball $B_{in}=\lbrace y\in L_1^+[0,\infty)$, $\vert\vert y\vert\vert<1\rbrace$, 
the unit sphere $B$ defined in Corollary 2, and the exterior of the unit ball 
$B_{ex}=\lbrace y\in L_1^+[0,\infty)$, $\vert\vert y\vert\vert>1\rbrace$. 
Accordingly, the recursion
$$
y_n(x)=  T^ny_0(x), \hskip 0.5cm \hbox{with} \hskip 0.3cm y_0(x) \in  L_1^+[0,\infty),
$$ 
with the norm $\vert\vert\cdot\vert\vert$ has three different behaviors 
depending on $y_0$:
\begin{itemize}
\item If $y_0\in B_{in}$, then $\lim_{n\to\infty}\vert\vert y_n(x)\vert\vert=0$.

\item If $y_0\in B$, then $\vert\vert y_n(x)\vert\vert=1$ 
(the succession $y_n$ remains in $B$ $\forall$ $n$).

\item If $y_0\in B_{ex}$, then $\lim_{n\to\infty}\vert\vert y_n(x)\vert\vert=\infty$.
\end{itemize}

Then, it is obvious that $y=0$ is a fixed point of $T$. The remaining fixed points of $T$ 
in $L_1^+[0,\infty)$, if any, must be located in $B$. In fact, we have the following Theorem.

\begin{theorem}
Apart from $y=0$, the one-parameter family of functions 
$y_\alpha(x)= \alpha e^{-\alpha x}$, $\alpha>0$, are the unique fixed points 
of $T$ in the space $L_1^+[0,\infty)$.
\label{teorema-unicidad}
\end{theorem}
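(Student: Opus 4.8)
The plan is to work with the integro-differential form of the fixed-point equation rather than the integral operator directly. Starting from the representation
$$
(Ty)(x)=\int_x^\infty dr\int_0^1 du\,\,y(ur)y((1-u)r),
$$
the fixed point condition $Ty=y$ gives $y(x)=\int_x^\infty F(r)\,dr$, where $F(r)=\int_0^1 y(ur)y((1-u)r)\,du$. The first observation is that this already forces $y$ to be differentiable (since the integrand is continuous by the corollary), so we may differentiate: $y'(x)=-F(x)=-\int_0^1 y(ux)y((1-u)x)\,du$. Writing $G(x):=xF(x)=\int_0^x y(s)y(x-s)\,ds$ (a convolution on $[0,x]$, after the substitution $s=ux$), the equation becomes $x\,y'(x)=-(y*y)(x)$, valid for $x>0$, with the boundary/normalization data $\|y\|=1$ and $<y>=\mu$ fixed.

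From here I would pass to the Laplace transform, which turns the convolution into a product. Let $\hat y(p)=\int_0^\infty e^{-px}y(x)\,dx$; note $\hat y(0)=\|y\|=1$ and $-\hat y'(0)=<y>=\mu$, both finite by hypothesis. The term $x\,y'(x)$ transforms (using $\mathcal L[xg(x)]=-\frac{d}{dp}\hat g(p)$ and integration by parts for $y'$) into a first-order expression in $\hat y$ and $\hat y'$, while $(y*y)(x)$ transforms to $\hat y(p)^2$. One expects to land on an ODE of the shape
$$
p\,\hat y'(p) + \hat y(p) = -\,\hat y(p)^2 ,
$$
i.e. $\bigl(p\,\hat y\bigr)' = -\hat y^2$, a separable Riccati-type equation. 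Solving it with the initial condition $\hat y(0)=1$ yields $\hat y(p)=\dfrac{1}{1+c\,p}$ for a constant $c>0$ (the sign/positivity of $c$ being forced by $-\hat y'(0)=c=\mu>0$), and inverting the Laplace transform gives exactly $y(x)=\frac1c e^{-x/c}=\alpha e^{-\alpha x}$ with $\alpha=1/\mu$. Conversely one checks directly that every $y_\alpha$ is a fixed point, which the norm and mean theorems already make plausible; a short substitution into \eqref{operator} closes this direction.

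The main obstacle I anticipate is analytic rather than algebraic: justifying the manipulations (differentiating under the integral sign, the validity of the Laplace transform and of $\hat y'(0)=-<y>$, and especially the boundary terms when integrating $x y'(x)$ by parts) for a general $y\in B$ that is only known a priori to be continuous, nonnegative, integrable, with finite first moment. One must argue that a fixed point automatically has enough decay and smoothness — e.g. that $y(x)\to 0$ as $x\to\infty$ and that $x\,y(x)\to 0$, so the boundary contributions vanish — before the ODE reduction is legitimate. Bootstrapping regularity from the integral equation $y(x)=\int_x^\infty F(r)\,dr$ (each integration improving smoothness, each giving a vanishing limit at infinity) is the natural tool here, and I would carry that out carefully first, then do the Laplace-transform computation, and finally record the elementary converse check.
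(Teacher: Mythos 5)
Your proposal is essentially the paper's own argument: the paper takes the one‑sided Fourier transform $\bar y(p)=\int_0^\infty e^{ipx}y(x)\,dx$ of the fixed‑point equation, differentiates in $p$ to get the same separable equation $\bar y+p\bar y'=\bar y^2$, solves it as $\bar y=1/(1-cp)$, and inverts (your Laplace variant only changes $p\mapsto ip$ and lets you fix the constant from $\hat y(0)=1$, $\hat y'(0)=-\langle y\rangle$ instead of the paper's reality‑plus‑contour argument). One small correction: the transformed equation should read $\bigl(p\,\hat y\bigr)'=+\hat y^{2}$, not $-\hat y^{2}$ (since $\mathcal L[xy'](p)=-\hat y-p\hat y'$ and both sides of $xy'=-(y*y)$ carry a minus sign), which is in fact the equation your stated solution $\hat y(p)=1/(1+cp)$ satisfies.
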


\begin{proof} 
They are fixed points. Using the representation (\ref{operator})
we find (see Figure \ref{Fig4}) that
$$
T\left(\alpha e^{-\alpha x}\right)={\alpha^2\over 2}\int_x^\infty dr\int_{-r}^r dt{e^{-\alpha r}\over r}=
$$
$$
\hskip 2cm=\alpha^2\int_x^\infty e^{-\alpha r}dr=\alpha e^{-\alpha x}.
$$

To see the uniqueness, consider the Fourier transform of $y\in L_1^+[0,\infty)$:
$$
\bar  y(p)=\int_0^\infty e^{i p x}y(x)dx,
$$
and the Fourier transform of $Ty\in L_1^+[0,\infty)$:
$$
(\overline{Ty})(p)=\int_0^\infty e^{i p x}dx\int\int_{S(x)}{dudv\over u+v}y(u)y(v)=
$$
$$
=\int_0^\infty du y(u)\int_0^\infty dv y(v)\int_0^{u+v}dx {e^{i p x}\over u+v}= 
$$
$$
={1\over ip}\int_0^\infty du y(u)\int_0^\infty dv y(v){e^{ip(u+v)}-1\over u+v}.\hskip 0.3cm
$$

If $(Ty)(x)=y(x)$ then $(\overline{Ty})(p)=\bar  y(p)$ and
$$
ip\bar  y(p)=\int_0^\infty du y(u)\int_0^\infty dv y(v){e^{ip(u+v)}-1\over u+v}.
$$
Taking the derivative with respect to $p$ in both sides of this equality we find that 
the Fourier transform $\bar  y(p)$ satisfies the differential equation: 
$$
\bar  y(p)+p\bar  y'(p)=\bar  y^2(p),
$$
whose general solution is
$$
\bar  y(p)={1\over 1-cp}, \hskip 1cm c\in C.
$$
When we are considering only positive functions $y(x)$, their Fourier transform must 
satisfy $\bar y(p)^*=\bar y(-p)$. This is only possible if $c=-c^*$, that is, the constant $c$ 
must be purely imaginary: $c=i/\alpha$, $\alpha\in R$ and then
$$
\bar  y(p)={1\over 1-(i/\alpha)p}, \hskip 1cm \alpha\in R.
$$
From the inverse theorem for the Fourier transform [\cite{cham}, p. 61, Theorem 8.3] we have that 
any solution $y(x)$ of the equation $Ty=y$ in $L_1^+[0,\infty)$ is of the form:
$$
y(x)={1\over 2\pi}\lim_{\epsilon\to 0}\int_{-\infty}^\infty{e^{-ipx-\epsilon
\vert p\vert}\over 1-(i/\alpha)p}dp.
$$
From [\cite{wong}, p. 197, Lemma 1] we have that
$$
y(x)={1\over 2\pi}\int_{-\infty}^\infty{e^{-ipx}\over 1-(i/\alpha)p}dp.
$$
Because $x>0$, the integrand is exponentially small when the imaginary part 
of $p$ is large and negative. Then, we can close the integration line $(-\infty,\infty)$ 
with a semi-infinite circle on the lower complex $p$-plane:
$$
y(x)={1\over 2\pi}\int_{\cal C}{e^{-ipx}\over 1-(i/\alpha)p}dp,
$$
where ${\cal C}$ is the union of the real line and that semicircle traveled in the negative 
sense (clockwise). The pole $p=-i\alpha$ of the integrand is outside of ${\cal C}$ if $\alpha<0$ 
and inside ${\cal C}$ if $\alpha>0$. Applying the Cauchy Residua Theorem to the above integral 
we obtain that 
$$
y(x)=\left\lbrace
\begin{array}{l}
0 \hskip 1.3cm  {\rm if}\hskip 2mm \alpha<0, \\ 
\alpha e^{-\alpha x}\hskip 5mm   {\rm if} \hskip 2mm \alpha>0.
\end{array}
\right.
$$
Then, $y=0$ and $y_\alpha(x)= \alpha e^{-\alpha x}$, $\alpha>0$, are the unique fixed points 
of $T$ in the space $L_1^+[0,\infty)$. $\hskip 0.5cm \clubsuit$
\end{proof}

 \begin{figure}[h]
  \centering \resizebox{0.60\columnwidth}{!}{\includegraphics{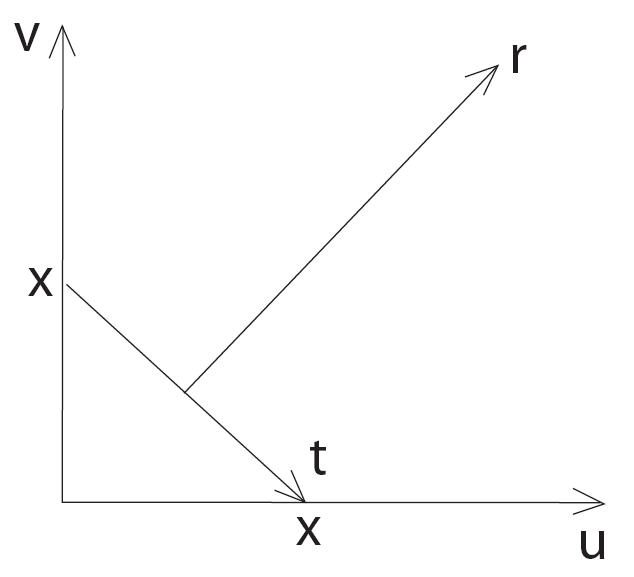}} 
  \caption{Integration region in the double integral considered in the proof 
  of Theorem \ref{teorema-unicidad}.}
  \label{Fig4}
\end{figure}

\begin{corollary} 
The family of functions $y_\alpha(x)= \alpha e^{-\alpha x}$, $\alpha>0$, 
are the unique fixed points of $T_B$ (the restriction of $T$ in the unit ball $B$).
\end{corollary}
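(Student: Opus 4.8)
The plan is to obtain this as an immediate consequence of Theorem \ref{teorema-unicidad}, together with the fact that $T$ restricts to a well-defined map $T_B:B\to B$ (this is exactly the content of the corollary in which $T_B$ was introduced). First I would note that every fixed point of $T_B$ is, in particular, a function $y\in B\subset L_1^+[0,\infty)$ satisfying $Ty=y$, and is therefore a fixed point of $T$ in $L_1^+[0,\infty)$. By Theorem \ref{teorema-unicidad}, the complete list of such functions is $y=0$ and the members of the family $y_\alpha(x)=\alpha e^{-\alpha x}$ with $\alpha>0$.

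It then only remains to sort out which of these lie in $B$. The zero function is excluded because $\vert\vert 0\vert\vert=0\neq 1$, so $0\notin B$ and hence $0$ is not a fixed point of $T_B$. On the other hand, each $y_\alpha$ does belong to $B$, since $\vert\vert y_\alpha\vert\vert=\int_0^\infty \alpha e^{-\alpha x}\,dx=1$; and the first part of the proof of Theorem \ref{teorema-unicidad} already shows $Ty_\alpha=y_\alpha$. Consequently the set of fixed points of $T_B$ is precisely $\{\,y_\alpha:\alpha>0\,\}$, which is the assertion.

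I do not expect any real obstacle: the substantive analysis is entirely carried by Theorem \ref{teorema-unicidad}, and this corollary is pure bookkeeping — checking that $T_B$ is genuinely defined on $B$ (already known), that the zero solution fails the normalization constraint, and that the exponential solutions satisfy it. The classification of fixed points of $T$ thus transfers verbatim to $T_B$ after removing the single point $y=0$.
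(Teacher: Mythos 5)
Your argument is correct and is exactly the route the paper intends: the corollary is stated without proof as an immediate consequence of Theorem \ref{teorema-unicidad}, and your bookkeeping (every fixed point of $T_B$ is a fixed point of $T$ in $L_1^+[0,\infty)$; $y=0$ is excluded since $\vert\vert 0\vert\vert=0\neq 1$; each $y_\alpha$ has unit norm and is fixed) fills in precisely the omitted details. Nothing is missing.
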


As we will show with some examples in the next section,
the numerical results indicate that, apart from the family
of exponential distributions, there is no other periodic or aperiodic 
asymptotic sets for the operator $T$ acting on the PDFs ball $B$. 
Thus, the whole unit ball $B$ seems to contract
to the exponential steady-state regime. In the next Theorem, we make another
step in this direction by showing that the operator $T$ has no cycles of order two.   

\begin{theorem} 
The operator $T$ has not cycles of order two in $L_1^+[0,\infty)$, that is, 
$T^2y=y\Rightarrow Ty=y$.
\end{theorem}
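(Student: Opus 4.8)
The plan is to re-run the Fourier-transform argument of Theorem~\ref{teorema-unicidad}, but applied to the \emph{pair} $(y,Ty)$ instead of to a single fixed point. First I would observe that a nontrivial $2$-cycle must live on the unit sphere: if $y\neq 0$ and $T^2y=y$, then Theorem~\ref{teorema-norma} gives $\vert\vert y\vert\vert=\vert\vert T^2y\vert\vert=\vert\vert Ty\vert\vert^2=\vert\vert y\vert\vert^4$, forcing $\vert\vert y\vert\vert=1$; hence $y\in B$, and putting $z:=Ty$ we also have $z\in B$, $Tz=T^2y=y$, $\bar y(0)=\bar z(0)=1$ and $\vert\bar y\vert,\vert\bar z\vert\le 1$ on $\mathbb{R}$ (the case $y=0$ being trivial). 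Then I would record the identity implicit in the proof of Theorem~\ref{teorema-unicidad}: the same differentiation under the integral sign performed there — legitimate by dominated convergence, since the $p$-derivative of the integrand $w(u)w(v)\frac{e^{ip(u+v)}-1}{u+v}$ is $i\,w(u)w(v)e^{ip(u+v)}$, dominated by $w(u)w(v)\in L^1$ — shows that for every $w\in L_1^+[0,\infty)$ one has $(p\,\overline{Tw})'(p)=\bar w(p)^2$. Applying this with $w=y$ and with $w=z$ yields the coupled system
$$
(p\,\bar z)'(p)=\bar y(p)^2,\qquad (p\,\bar y)'(p)=\bar z(p)^2.
$$

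Next I would subtract the two equations. With $D:=\bar y-\bar z$ and $S:=\bar y+\bar z$, the difference is $(pD)'=\bar z^2-\bar y^2=-SD$, i.e.\ the linear homogeneous equation $pD'=-(S+1)D$ for $p\neq 0$, whose solution is $D(p)=D(p_*)\exp\!\bigl(-\int_{p_*}^{p}\frac{S(s)+1}{s}\,ds\bigr)$ for $p,p_*>0$. Since $\mathrm{Re}\,\bar y(s)=\int_0^\infty\cos(sx)\,y(x)\,dx\to 1$ and $\mathrm{Re}\,\bar z(s)\to 1$ as $s\to 0^+$ (dominated convergence once more), there is $\delta>0$ with $\mathrm{Re}\,S(s)+1>2$ on $(0,\delta)$; hence for any fixed $p\in(0,\delta)$ and all $p_*\in(0,p)$ one gets $\vert D(p)\vert=\vert D(p_*)\vert\exp\!\bigl(-\int_{p_*}^{p}\frac{\mathrm{Re}\,S(s)+1}{s}\,ds\bigr)\le 2\,(p_*/p)^{2}$, and letting $p_*\to 0^+$ forces $D(p)=0$. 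Thus $D\equiv 0$ on $(0,\delta)$, hence on all of $(0,\infty)$ (feed a point of $(0,\delta)$ back into the solution formula), hence on $\mathbb{R}$ by continuity at $0$ together with $D(-p)=\overline{D(p)}$. Therefore $\bar y\equiv\bar z$; by the Fourier inversion theorem invoked in the proof of Theorem~\ref{teorema-unicidad} ([\cite{cham}, Theorem~8.3]) $y=z$ almost everywhere, and since $y=T^2y$ and $z=Ty$ are continuous (first Corollary to Theorem~\ref{teorema-norma}), $Ty=z=y$.

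The step I expect to be the main obstacle is the passage to the limit $p_*\to 0^+$: the equation for $D$ is singular at the origin, with indicial exponent $-3$ (because $\mathrm{Re}\,S(0^+)=2$), so one must make sure the a priori bound $\vert D\vert\le 2$ genuinely suppresses the singular branch. This is precisely where membership in the unit sphere, $\bar y(0)=\bar z(0)=1$ (equivalently $\vert\vert y\vert\vert=1$), is used; the estimate is in fact comfortably robust, and one could further sharpen it to $D(p)=O(p^{2})$ near $0$ by invoking mean conservation $<z>=<y>$, although that refinement is not needed. Everything else is a routine recombination of identities already established for Theorems~\ref{teorema-norma} and~\ref{teorema-unicidad}, the only extra ingredients being the standard differentiation-under-the-integral and linear-ODE-uniqueness facts noted above.
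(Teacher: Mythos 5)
Your proof is correct, and it follows the paper's own strategy up to and including the key step: passing to Fourier transforms and deriving the coupled system $[p(\overline{Ty})(p)]'=\overline{y}^{\,2}(p)$, $[p\overline{y}(p)]'=(\overline{Ty})^2(p)$. Where you diverge is in how this system is resolved. The paper cross-multiplies the two equations and integrates to obtain the first integral $(\overline{Ty})^3(p)=\overline{y}^{\,3}(p)+c/p^3$, then sets $c=0$ by continuity of both transforms at $p=0$ and concludes $\overline{Ty}=\overline{y}$. You instead subtract the equations, obtaining the linear singular ODE $pD'=-(S+1)D$ for the difference $D=\overline{y}-\overline{Ty}$, and annihilate $D$ by the indicial estimate at $p=0$, using the a priori bound $\vert D\vert\le 2$ together with $\mathrm{Re}\,S(0^+)=2$. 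Your route is somewhat longer but buys two things: (i) it makes explicit the preliminary fact (absent from the paper) that a nontrivial $2$-cycle must lie on the unit sphere $B$, via $\vert\vert y\vert\vert=\vert\vert T^2y\vert\vert=\vert\vert y\vert\vert^4$, which is what licenses the bound $\vert D\vert\le 2$ and the limit $S(0^+)=2$; and (ii) it bypasses the passage from $(\overline{Ty})^3=\overline{y}^{\,3}$ to $\overline{Ty}=\overline{y}$, which the paper asserts without addressing the cube-root-of-unity ambiguity on components of $\lbrace \overline{y}\neq 0\rbrace$ not containing the origin (there the continuity-at-$0$ argument alone does not pin down the branch). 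Conversely, the paper's argument is shorter and does not need the normalization $\vert\vert y\vert\vert=1$ at all, since $c=0$ follows from boundedness near $p=0$ regardless of the value of the norm. Both endings then conclude identically via Fourier inversion and continuity of $Ty$ and $T^2y$.
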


\begin{proof} 
A similar computation to the one introduced in the proof of Theorem \ref{teorema-unicidad} 
shows that
$$
\left\lbrace\begin{array}{lll}
(\overline{Ty})(p) & =  \displaystyle{\int_0^\infty du y(u)\int_0^\infty dv y(v){e^{ip(u+v)}-1\over ip(u+v)},} \\
(\overline{T^2y})(p) & = \displaystyle{\int_0^\infty du (Ty)(u)\int_0^\infty dv (Ty)(v){e^{ip(u+v)}- 1\over ip(u+v)}}= \\
& = \overline{y}(p). 
\end{array}\right.
$$
Taking the derivative with respect to $p$ in both sides of both equations, we obtain the following  
system of differential equations for $\overline{y}(p)$ and $(\overline{Ty})(p)$:
$$
\left\lbrace\begin{array}{cc}
[p(\overline{Ty})(p)]'= & \overline{y}^2(p), \\ 
\,[p\overline{y}(p)]'= & (\overline{Ty})^2(p). 
\end{array}\right.
$$
Taking the cross product of these equations and multiplying by $p^2$ we get the equation:
$$
p^2\overline{y}^2(p)[p\overline{y}(p)]'=p^2(\overline{Ty})^2(p)[p(\overline{Ty})(p)]' \,,
$$
and integrating:
$$
(\overline{Ty})^3(p)=\overline{y}^3(p)+{c\over p^3}\,, \hskip 1cm c\in \mathbb{C}.
$$
But both, $\overline{y}(p)$, $(\overline{Ty})(p)\in {\cal C}(-\infty,\infty)$ and then $c=0$. 
This means $(\overline{Ty})(p)=\overline{y}(p)$ and $(Ty)(x)=y(x)$ (a.e.).  
$\hskip 0.5cm \clubsuit$
\end{proof}

To finish this section, another interesting property of operator $T$ 
concerning to the value of the norm of the derivatives of $T^ny$ is 
disclosed in the following Proposition. A consequence of this property
is also given.

\begin{proposition} 
For any $y\in L_1^+[0,\infty)$ and $m,n\in N$, with $m\le n$, 
we have the following properties:
$$
(T^ny) \hskip 0.2cm \hbox{is $m$ times differentiable},
$$ 
$$
(-1)^m(T^ny)^{(m)}\in L_1^+[0,\infty),
$$
$$
\vert\vert (T^ny)^{(m)}\vert\vert=\vert (T^ny)^{(m-1)}(0)\vert, 
$$
$$
(-1)^m(T^ny)^{(m)}(0)= \hskip 4cm
$$
$$
\hskip 1.8cm ={1\over m}\sum_{k=0}^{m-1}(T^{n-1}y)^{(k)}(0)(T^{n-1}y)^{(m-k-1)}(0).
$$
\end{proposition}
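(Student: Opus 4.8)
\emph{Strategy.} The plan is to induct on $n$, reading every assertion off the last of the four representations of $T$ displayed above, $(Tz)(x)=\int_x^\infty dr\int_0^1 du\,z(ur)z\bigl((1-u)r\bigr)$. Writing $g_n:=T^ny$ and $\Phi_{n-1}(r):=\int_0^1 g_{n-1}(ur)g_{n-1}\bigl((1-u)r\bigr)\,du$, this representation reads $g_n(x)=\int_x^\infty\Phi_{n-1}(r)\,dr$, which gives the basic recursion
$$
g_n'(x)=-\,\Phi_{n-1}(x)=-\int_0^1 g_{n-1}(ux)\,g_{n-1}\bigl((1-u)x\bigr)\,du .
$$
The induction hypothesis at level $n-1$ is the whole conclusion of the Proposition with $n$ replaced by $n-1$: $g_{n-1}$ is $(n-1)$ times differentiable and $(-1)^jg_{n-1}^{(j)}\in L_1^+[0,\infty)$ for $0\le j\le n-1$. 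The base case $n=1$ (only $m=1$ there) is a direct computation from the two displays, valid under the mild regularity on $y$ already implicit in the first Corollary to Theorem~\ref{teorema-norma}, which ensures $\Phi_0\in L_1^+$ and $g_1$ continuous on $[0,\infty)$.

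\emph{Differentiability and the sign pattern.} Differentiating the recursion $m-1$ further times under the integral sign and expanding each integrand by the Leibniz rule yields, for $1\le m\le n$,
$$
(-1)^m g_n^{(m)}(x)=\sum_{i=0}^{m-1}\binom{m-1}{i}\int_0^1 u^i(1-u)^{m-1-i}\,\bigl[(-1)^i g_{n-1}^{(i)}(ux)\bigr]\,\bigl[(-1)^{m-1-i}g_{n-1}^{(m-1-i)}\bigl((1-u)x\bigr)\bigr]\,du ,
$$
where I used $(-1)^m=-(-1)^{m-1}=-(-1)^i(-1)^{m-1-i}$ to split the sign between the two factors. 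This is precisely where $m\le n$ enters: it forces $i\le m-1\le n-1$ and $m-1-i\le n-1$, so that the two derivatives of $g_{n-1}$ occurring do exist and, by the induction hypothesis, each bracket is a non-negative function. Hence $g_n$ is $m$ times differentiable and $(-1)^mg_n^{(m)}\ge0$ pointwise, for every $m\le n$.

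\emph{Norm identity and integrability.} These come out of a secondary induction on $m$. For $m=1$: $g_n=T^ny\in L_1^+$ by Theorem~\ref{teorema-norma}, and $g_n'\le0$; being non-negative, non-increasing and integrable, $g_n$ tends to $0$ at $\infty$, so $\|g_n'\|=\int_0^\infty(-g_n')=g_n(0)=|g_n(0)|<\infty$. For the step, assume $(-1)^{m-1}g_n^{(m-1)}\in L_1^+$; this function is non-negative (previous step) and non-increasing, its derivative being $-\bigl[(-1)^mg_n^{(m)}\bigr]\le0$, hence $g_n^{(m-1)}(\infty)=0$, and therefore
$$
\int_0^\infty(-1)^mg_n^{(m)}(x)\,dx=(-1)^{m-1}\bigl[g_n^{(m-1)}(0)-g_n^{(m-1)}(\infty)\bigr]=(-1)^{m-1}g_n^{(m-1)}(0)<\infty .
$$
This gives at once $(-1)^mg_n^{(m)}\in L_1^+$ and $\|g_n^{(m)}\|=|g_n^{(m-1)}(0)|$, closing the secondary induction.

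\emph{The recursion at the origin, and the main obstacle.} Putting $x=0$ in the Leibniz expansion and using $\int_0^1u^i(1-u)^{m-1-i}\,du=i!\,(m-1-i)!/m!$, which makes $\binom{m-1}{i}\int_0^1u^i(1-u)^{m-1-i}\,du=1/m$ independently of $i$, one gets
$$
(-1)^m g_n^{(m)}(0)=\frac1m\sum_{k=0}^{m-1}\bigl[(-1)^k g_{n-1}^{(k)}(0)\bigr]\,\bigl[(-1)^{m-1-k}g_{n-1}^{(m-1-k)}(0)\bigr],
$$
which is the stated recursion, each factor being $\bigl|(T^{n-1}y)^{(k)}(0)\bigr|$ by the sign pattern already proved. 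I expect the real difficulty to lie not in this algebra but in the analysis underpinning it: justifying the $m-1$ differentiations under the integral sign that produce the Leibniz expansion, together with the interchange of limits at the corners $u\in\{0,1\}$ and $x=0$ and the finiteness of the boundary values $g_{n-1}^{(k)}(0)$. What makes this work is the smoothing built into the representation: after the substitution $s=ux$ each term becomes, up to the elementary weight $s^i(x-s)^{m-1-i}/x^m$, a half-line convolution of $g_{n-1}^{(i)}$ with $g_{n-1}^{(m-1-i)}$ where $i+(m-1-i)=m-1\le n-1$, so that no derivative of $g_{n-1}$ beyond the $(n-1)$st ever appears; for $n\ge2$ the function $g_{n-1}$ is already continuous (being an image of $T$), and the worst irregularity that can occur---a logarithmic blow-up of $g_1=Ty$ at the origin when $y$ is merely in $L_1^+$---is locally integrable and is therefore smoothed away by one more application of $T$, so that for $n\ge2$ no hypothesis on $y$ beyond $y\in L_1^+$ is needed; for $n=1$ the identities are read in the almost-everywhere sense, which suffices to start the induction. $\hskip 0.5cm\clubsuit$
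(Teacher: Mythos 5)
Your proof is correct and follows essentially the same route as the paper: the same final representation of $T$, the same first-derivative recursion $(T^ny)'(x)=-\int_0^1(T^{n-1}y)(ux)(T^{n-1}y)((1-u)x)\,du$, and the same Leibniz expansion combined with induction (merely reorganized as an outer induction on $n$ rather than on $m$). You additionally supply the details the paper dismisses as trivial — the monotonicity argument giving $\Vert(T^ny)^{(m)}\Vert=\vert(T^ny)^{(m-1)}(0)\vert$ and the Beta-integral evaluation $\binom{m-1}{k}\int_0^1u^k(1-u)^{m-1-k}\,du=1/m$ — and you correctly read the final recursion with the factors taken as $\vert(T^{n-1}y)^{(k)}(0)\vert$, which fixes an inconsequential sign slip in the statement as printed.
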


\begin{proof} 
For fixed $n$, let us show that $(T^ny)$ is $m$ times continuously differentiable and 
$(-1)^m(T^ny)^{(m)}\in L_1^+[0,\infty)$  
by induction over $m$. It is obvious for $m=0$: $(T^ny)$ is continuous and 
$(T^ny)\in L_1^+[0,\infty)$ for any $n=1,2,3,\ldots$ Suppose that it is true for a certain 
$m-1<n$ and let us show that it is also true for $m$. From the last representation of the integral 
operator $T$ given after Definition \ref{def-region}, we have that
$$
(T^ny)'(x)=-\int_0^1 (T^{n-1}y)(ux)(T^{n-1}y)((1-u)x)du<0,
$$
and then
$$
(-1)^m(T^ny)^{(m)}(x)= \hskip 6cm
$$
$$
\sum_{k=0}^{m-1}\dbinom{m-1}{k}\int_0^1 u^k(1-u)^{m-k-1}\left[(-1)^k(T^{n-1}y)^{(k)}(ux)\right]
$$
$$
\hskip 1.7cm \times\left[(-1)^{m-k-1}(T^{n-1}y)^{(m-k-1)}((1-u)x)\right]du. 
$$
From the induction hypothesis, all the integrands in the above sum of integrals are continuous 
and then $(T^ny)^{(m)}$ is continuous and integrable. Moreover, they are positive and 
then $(-1)^m(T^ny)^{(m)}(x)>0$. 
The last two assertions of the Proposition are trivial. $\hskip 0.5cm \clubsuit$
\end{proof}

As a consequence of the former Proposition, we have:

\begin{proposition} 
A fixed point $y$ of $T$, if any, must be a completely monotonic function: 
$y\in{\cal C}^\infty[0,\infty)$ and $(-1)^my^{(m)}(x)\ge 0$ $\forall x\in[0,\infty)$.
Then, as it can be easily tested, the exponential distribution
$y_\alpha(x)= \alpha e^{-\alpha x}$, $\alpha>0$, is one of 
such type of functions. 
\end{proposition}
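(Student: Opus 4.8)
The plan is to obtain this as an immediate consequence of the previous Proposition combined with the fixed-point identity. If $y$ is a fixed point of $T$, then $T^n y = y$ for every $n \in \mathbb{N}$, so everything the previous Proposition asserts about $T^n y$ can be asserted about $y$ itself, and the point is merely to let $n$ (hence the order $m \le n$ of differentiability) grow without bound.

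Concretely, I would fix an arbitrary $m \in \mathbb{N}$ and invoke the previous Proposition with $n = m$ (any $n \ge m$ works equally well): it yields that $T^n y$ is $m$ times differentiable and that $(-1)^m (T^n y)^{(m)} \in L_1^+[0,\infty)$. Since $T^n y = y$, this reads: $y$ is $m$ times differentiable and $(-1)^m y^{(m)} \in L_1^+[0,\infty)$. By Definition~1, membership in $L_1^+[0,\infty)$ forces a function to be nonnegative on $[0,\infty)$ (in addition to having finite norm), so $(-1)^m y^{(m)}(x) \ge 0$ for all $x \ge 0$. Because $m$ was arbitrary, $y \in \mathcal{C}^\infty[0,\infty)$ and $(-1)^m y^{(m)}(x) \ge 0$ for every $m$ and every $x$, which is exactly the definition of a completely monotonic function.

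For the closing remark, a one-line computation suffices: for $y_\alpha(x) = \alpha e^{-\alpha x}$ one has $y_\alpha^{(m)}(x) = (-1)^m \alpha^{m+1} e^{-\alpha x}$, hence $(-1)^m y_\alpha^{(m)}(x) = \alpha^{m+1} e^{-\alpha x} > 0$, so each exponential distribution is indeed of this type. There is essentially no obstacle in this argument: all of the analytic work has already been pushed into the previous Proposition, and the only points deserving a moment's care are the bookkeeping with one-sided derivatives at the endpoint $x = 0$ (already built into the $\mathcal{C}[0,\infty)$ / $L_1^+[0,\infty)$ formulation of that Proposition) and the observation that the $L_1^+$ condition bundles nonnegativity together with integrability. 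The conceptual content reduces to recognizing that $y = T^n y$ allows $m$ to be taken arbitrarily large.
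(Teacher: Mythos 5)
Your argument is correct and is essentially identical to the paper's own proof: both deduce $T^n y = y$ for all $n$, fix an arbitrary $m$, choose $n \ge m$, and read off smoothness and the sign condition $(-1)^m y^{(m)} \ge 0$ from the preceding Proposition applied to $T^n y$. The explicit check for $y_\alpha(x) = \alpha e^{-\alpha x}$ is the same routine computation the paper leaves to the reader.
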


\begin{proof}
If $Ty=y$ for a certain $y\in L_1^+[0,\infty)$, 
then $T^ny=y$ for any $n\in N$. For arbitrary $m\in N$, take $n\ge m$. 
Our goal is proved just seeing from the former Proposition 
that $(-1)^m y^{(m)}(x)=(-1)^m(T^ny)^{(m)}(x)\ge 0$.
$\hskip 0.5cm \clubsuit$
\end{proof}

\section{Time evolution of $T$: Examples}
\label{sec:3}

As discussed in the former section, the iteration of $T$ for any PDF $y$
in the unit ball $B$ with a finite $<y>$ seems to converge toward the only 
fixed point of the system, i.e. the exponential distribution presenting the 
same mean value of the initial condition $y$. In this respect, we make the 
following conjecture:
 
\noindent
{\bf Conjecture}
{\it For any $y\in B$ with a finite $<y>$, 
$$
\lim_{n\to\infty} (T^ny)(x)=\alpha e^{-\alpha x}
$$ 
with
$$
{1\over \alpha}=<y>=\int_0^\infty xy(x)dx.
$$
}
From simulations (see Fig. \ref{Fig2}), it can be guessed 
that the convergence suggested in this conjecture
is fast. Next, we compute and verify the contraction of the dynamics toward 
the exponential distribution under the action of $T$ for some particular PDFs.
  
\begin{example}
For $\alpha>0$ and $n\in\mathbb{N}$, consider the family 
$$
y_{\alpha,n}(x)={\alpha ^{n+1}\over n!}x^ne^{-\alpha x}\in B.
$$
The mean value is:
$$
<y_{\alpha,n}>=\gamma^{-1}=(n+1)/\alpha.
$$
The limit distribution of $T^ny_{\alpha,n}$ when $n\to\infty$
should be $w(x)=\gamma e^{-\gamma x}$. The first iteration is:
$$
(Ty_{\alpha,n})(x)={\alpha \sqrt\pi\over 2^{2n+1}n!\Gamma(n+3/2)}\Gamma(2n+1,\alpha x).
$$
We numerically check that 
$$
\vert\vert Ty_{\alpha,n}-w\vert\vert<\vert\vert y_{\alpha,n}-w\vert\vert
$$
for different values of $\alpha$ and $n$.
\end{example}

\begin{example}
For $\alpha$, $\beta>0$, consider the new family
$$
y_{\alpha,\beta}(x)={1\over 2}\left[\alpha e^{-\alpha x}+\beta e^{-\beta x}\right]\in B.
$$
The mean value is:
$$
<y_{\alpha,\beta}>=\gamma^{-1}={\alpha\beta\over 2(\alpha+\beta)}.
$$
The limit distribution should be $w(x)=\gamma e^{-\gamma x}$. 
The first iteration of $T$ gives
$$
(Ty_{\alpha,\beta})(x)= \hskip 6cm
$$
$$
={1\over 4}\left\lbrace \alpha e^{-\alpha x}+\beta e^{-\beta x}+
{2\alpha\beta\over\alpha-\beta}\left[\Gamma(0,\beta x)-\Gamma(0,\alpha x)\right]\right\rbrace.
$$
Again, we numerically check that
$$
\vert\vert Ty_{\alpha,\beta}-w\vert\vert<\vert\vert y_{\alpha,\beta}-w\vert\vert.
$$
for several values of $\alpha$ and $\beta$.
\end{example}

\begin{example}
For $\alpha>0$ and $n\in\mathbb{N}$, consider the $\epsilon$-family of functions
$$
y_{\epsilon,\alpha,n}(x)=(1-\epsilon)\alpha e^{-\alpha x}+
\epsilon{\alpha^{n+1}\over n!}x^n e^{-\alpha x}
$$
with $0\le\epsilon\le 1$. It is straightforward to see that $y_{\epsilon,\alpha,n}\in B$. 
The mean value is: 
$$
<y_{\epsilon,\alpha,n}>=\gamma^{-1}={1+\epsilon n\over\alpha}.
$$
The limit function should be $w(x)=\gamma e^{-\gamma x}$. The first iteration is:
$$
Ty_{\epsilon,\alpha,n}(x)=\alpha\left\lbrace 1+\epsilon\left[\epsilon-2+{2(1-\epsilon)
\over(n+1)!}e^{\alpha x}\Gamma(n+1,\alpha x)+\right.\right.
$$
$$
\hskip 2cm \left.\left.+{\epsilon\sqrt{\pi}\over 4^n n!}{e^{\alpha x}
\over 2\Gamma(n+3/2)}\Gamma(2n+1,\alpha x)\right]\right\rbrace e^{-\alpha x}.
$$
Numerically, it is found that
$$
\vert\vert Ty_{\epsilon,\alpha,n}-w\vert\vert<\vert\vert y_{\epsilon,\alpha,n}-w\vert\vert
$$
for different values of $\epsilon$, $n$ and $\alpha$.
\end{example}

\section{Conclusions}

From a macroscopic point of view, markets have an intrinsic random ingredient 
as a consequence of the interaction of an undetermined set of agents at each moment.
Also the unknowledge of the exact quantities of money exchanged by the agents at each 
moment contributes to the randomness of the global system. Even under these circumstances,
the system can reach an equilibrium. Concretely, it has been found in western societies that
the exponential distribution fits well the wealth distribution for the income of 
the $95\%$ of the population.

A statistical model recently introduced \cite{lopezruiz2011}
that takes into account all these idealistic 
random characteristics has been considered in this work. Some of its mathematical properties 
have been carefully worked out and presented here. These exact results help us to understand 
why the exponential distribution is computationally found as the asymptotic steady state 
of this continuous economic model. 
This fact certifies that markets, as well as many other similar 
statistical systems ranging from physics to biology, can be modeled as dissipative dynamical 
systems having the exponential distribution, or any other distribution if it is the case,
as a stable equilibrium, particularly under totally random conditions.

\section*{Acknowledgments}
J.L.L. and R.L.-R. acknowledge financial support from  the Spanish research projects
DGICYT Ref. MTM2010-21037 and DGICYT-FIS2009-13364-C02-01, respectively.


\end{document}